\def\be{\begin{equation}}
\def\ee{\end{equation}}
\def\ba{\begin{array}{c}}
\def\ea{\end{array}}
\def\p{\partial}
\def\ben{$$}
\def\een{$$}
\newcommand{\bea}{\begin{eqnarray}}
\newcommand{\eea}{\end{eqnarray}}
\newcommand{\kt}{\rangle}
\newcommand{\br}{\langle}
\newtheorem{thm}{Theorem}
\newtheorem{lemma}[thm]{Lemma}
\newtheorem{prop}[thm]{Proposition}
\newtheorem{defn}[thm]{Definition}
\newenvironment{proof}{\noindent {\bf Proof.}}{\hfill$\square$\vspace{3mm}\endtrivlist}
\begin{document}

\begin{center}

{\Large \bf

Non-Hermitian Heisenberg representation

  }

\vspace{9mm}

{Miloslav Znojil}

\vspace{9mm}

Nuclear Physics Institute CAS, 250 68 \v{R}e\v{z}, Czech Republic

{znojil@ujf.cas.cz}

{http://gemma.ujf.cas.cz/\~{}znojil/}

\end{center}

\section*{Abstract}

In a way paralleling the recently accepted non-Hermitian version of
quantum mechanics in its Schr\"{o}dinger representation (working
often with the innovative and heuristically productive concept of
${\cal PT}-$symmetry), it is demonstrated that it is also possible
to construct an analogous non-Hermitian version of quantum mechanics
in its Heisenberg representation.

\section*{Keywords}

quantum mechanics; non-Hermitian representation of observables;
generalized Heisenberg equations;

\newpage

\section{Introduction \label{oh} }

One of the paradoxes connected with the birth of quantum theory
\cite{nine} is that it was first formulated in its less economical,
matrix version called Heisenberg representation (HREP,
\cite{Heisen}). Naturally, almost immediately there emerged its
amended, equivalent but more economical vectorial version carrying
the name of Schr\"{o}dinger representation (SREP, \cite{Schroe}).

An explanation of the apparent paradox may be found in the much more
friendly correspondence of the quantum HREP picture with its
classical limit. As a consequence, an additional feature of the
paradox is that the HREP formulation of quantum theory survived in
the competition. During cca ninety years of its existence the HREP
language found important specific and unique applications,
typically, in the models using creation and annihilation operators
\cite{Messiah} and, in particular, in the context of relativistic
quantum field theory \cite{Drell}.

We feel inspired by a certain loss of ambition of the HREP
theoreticians and users in the newly emergent field of the study of
quantum observables in their so called non-Hermitian representations
\cite{book}. This loss of ambition resulted in a loss of the
historical leadership because the innovated quantum theory seems
currently available just in its SREP non-Hermitian version (SRNH,
see, e.g., reviews \cite{Carl,ali}). Thus, we believe that it is
time to describe and discuss also a parallel upgrade of the theory
in its alternative, non-Hermitian HREP-based (NHH) version.

The first steps in this direction are to be made in what follows.

\section{Quantum theory using time-independent Hilbert spaces}

From the historical point of view the SREP-based transition to
non-Hermitian Hamiltonians and/or other observables was introduced,
almost 60 years ago, by Freeman Dyson \cite{Dyson}. Unfortunately,
it took several decades before the idea found another and,
incidentally, most successful series of applications in nuclear
physics (cf. review plus basic SRNH theory in Ref.~\cite{Geyer}).
The reason of delay lied, obviously, in a perceivable increase of
mathematical complexity during the SREP $\to$ SRNH change of the
perspective. In contrast, the main source of the persuasive delayed
success of the innovated non-Hermitian theory was truly unexpected.
After the SREP $\to$ SRNH upgrade of the algorithms people
reported a perceivable {\em simplification} of numerical
computations~\cite{Geyer}.

The latter observation returned attention to various unitary
time-evolution SREP prescriptions with difficult (i.e., e.g.,
manybody) Hamiltonians
$\mathfrak{h}_{(SREP)}=\mathfrak{h}_{(SREP)}^\dagger\,$,
 \be
 {\rm i}\p_t|\varphi^{(SREP)}(t)\kt =
 \mathfrak{h}_{(SREP)}\,|\varphi^{(SREP)}(t)\kt\,,
 \  \ \ \ \ \
 |\varphi^{(SREP)}(t)\kt
 \in {\cal H}^{(P)}\,.
 \label{uneSEtaune}
 \ee
According to conventional textbooks the Hamiltonians may be
complicated but they {\em must} be self-adjoint in the preselected
Hilbert space of states in which the superscript $^{(P)}$ stands for
``physical''. In the subsequent step the Dyson's idea has been
recalled. Its essence lies in an inclusion of a part of our {\it a
priori} knowledge of the structure of $|\varphi^{(SREP)}(t)\kt$ into
a {\em non-unitarily} pre-conditioned ansatz
 \be
 |\varphi^{(SREP)}(t)\kt =\Omega|\psi^{(SRNH)}(t)\kt\,,
 \ \ \ \ \ \
 |\psi^{(SRNH)}(t)\kt \in {\cal H}^{(F)}
 \,
 \label{2}
 \ee
with a nontrivial operator $ \Omega^\dagger\Omega=\Theta\neq I$
called metric.

Under some reasonable assumptions about the linear and
time-independent operators $\Omega$ and $\Theta$ the latter ansatz
converts Eq.~(\ref{uneSEtaune}) into its formally equivalent SRNH
alternative \cite{Geyer}
 \be
 {\rm i}\partial_t\,|\psi^{(SRNH)}(t)\kt
 = H_{(SRNH)}\,|\psi^{(SRNH)}(t)\kt \,,
 \ \
 |\psi^{(SRNH)}(t)\kt \in {\cal H}^{(F)}
 \,.
 \label{eqF}
 \ee
Here, one works in another, auxiliary Hilbert space which is, by
construction, unphysical. Its symbol is characterized by
superscripted $^{(F)}$ which abbreviates ``friendly'' as well as
``false'' \cite{SIGMA}.

Although it gave the name to the theory, it is not too relevant that
the transformed Hamiltonian is manifestly non-Hermitian, i.e., that
we have $H_{(SRNH)} \neq H_{(SRNH)}^\dagger$ in the
$F-$superscripted Hilbert space. The reason is that the latter
Hilbert space is purely auxiliary and that it is not intended to
carry any quantum theoretical meaning by itself. What is only
important is that in the light of the underlying theory the
necessary physical meaning may easily be reinstalled if one
redefines the inner product \cite{Carl,ali}.  In mathematical
language this means that one has to replace the unphysical
$F-$superscripted Hilbert space by its ``standard'' physical
amendment ${\cal H}^{(S)}$. The latter space differs from its
predecessor ${\cal H}^{(F)}$ strictly and solely by its upgraded,
metric-dependent inner product between any pair of its {\em shared}
ket-vector elements,
 \be
 \br {\psi}_1
 |\psi_2\kt_{(S)} \ \equiv \
  \br \psi_1 |\Theta
 |\psi_2\kt_{(F)} \,,
  \ \ \ \ |\psi_{1,2}\kt \in {\cal H}^{(F,S)}\,.
 \label{rewpro}
 \ee
The mathematical core of the whole SRNH scheme of
Refs.~\cite{Carl,ali,Geyer} just follows from the obvious identity
 \be
 \br \varphi_1| \varphi_2 \kt =
 \br \psi_1 |\Theta
 |\psi_2\kt = \br \tilde{\psi}_1
 |\psi_2\kt\,,
 \ \ \ \ | \varphi_{1,2} \kt  \in {\cal H}^{(P)}
 \,,
 \ \ \ \ |\psi_{1,2}\kt,
 \, |\tilde{\psi}_{1}\kt \in {\cal H}^{(F)}\,.
 \label{newpro}
 \ee
These relations may be read as a proof of the strict unitary
equivalence ${\cal H}^{(S)}\ \equiv \ {\cal H}^{(P)}$ between the
two alternative physical Hilbert spaces. As an immediate consequence
one deduces the indistinguishability between the $P-$based and
$S-$based physical probabilistic predictions.

One of the important advantages of the SRNH formalism is that in the
light of relations (\ref{newpro}) one can completely suppress the
reference to our ultimate physical $S-$superscripted SRNH Hilbert
space once one decides to use the nontrivial metric, whenever
needed, in explicit manner. Such a convention is also being accepted
in the majority of the newer SRNH applications as reviewed in
Refs.~\cite{Carl,ali} or \cite{MZbook}.

In the more pragmatic and phenomenological physical setting it makes
sense to recall, once more, the instructive nuclear-physics
illustration of Ref.~\cite{Geyer} in which the initial space ${\cal
H}^{(P)}$ had the form of a Pauli-antisymmetrized Fock space of the
half-spin fermions while the (shared) ket-vector elements of
Dyson-partner spaces ${\cal H}^{(F,S)}$ were the states which obeyed
the perceivably simpler statistics of effective bosons. Another
important class of implementations of the whole SRNH pattern are the
systems in which the non-Hermitian Hamiltonians in Schr\"{o}dinger
Eq.~(\ref{eqF}) are pseudo-Hermitian {\it alias} ${\cal PT}$
symmetric (cf. reviews \cite{Carl,ali,MZbook} for more details).

\section{Time-dependent maps $\Omega(t)$}

In the conventional, Hermitian quantum models the operators of
observables $\mathfrak{a}$ are usually assumed, in the SREP setting,
time-independent, $\mathfrak{a}^{(SREP)}(t)=
\mathfrak{a}^{(SREP)}(0)$. Less often, the time-dependent influence,
say, of a variable external field is reflected by an induced
time-dependence in $\mathfrak{a}^{(SREP)}(t)$. In such a case, it
must be treated as an independent dynamical input information.
Hence, it cannot follow from any equations and must be prescribed in
full extent.

In contrast, what is assumed time-independent in the conventional
Hermitian HREP picture are wave functions,
 \be
 |\varphi^{(HREP)}(t)\kt = |\varphi^{(HREP)}(0)\kt\,.
 \label{rovnajse}
 \ee
The necessary unitary transformation which mediates the transition
between the SREP and HREP pictures is elementary. Its form may be
found in any textbook \cite{Messiah}.

Our present study is inspired by the fact that the conventional
unitary SREP $\to $ HREP transition may be given the generic form of
Eq.~(\ref{2}). Unfortunately, this seems to be in conflict with the
current literature on non-Hermitian models. Indeed, up to a few
inconclusive exceptions \cite{Bila}  there emerged virtually no
studies using time-dependent non-unitary maps $\Omega(t)$ and/or
nontrivial metrics $\Theta(t)$. In \cite{ali}, moreover, it has been
shown that the stationarity of metrics is, up to an overall phase,
unavoidable whenever one requires that the time-evolution generator
$H$ entering SRNH Eq.~(\ref{eqF}) represents an observable quantity.

In a way explained in Ref.~\cite{timedep} we know that the
requirement of the observability of the time-evolution generator is
rather artificial and that its removal leaves the whole SRNH theory
fully mathematically consistent. Simultaneously, we are fully aware
of the fact that many practical merits of the SRNH recipes (and,
first of all, the possibility of a simplification of
Eq.(\ref{uneSEtaune})) get definitely lost for nontrivial $\Omega(t)
\neq \Omega(0)$ \cite{Coriolis}. In the context of such a conflict
of our own opinions the core of our present message is twofold.
Firstly, we shall outline the NHH formalism in a concrete form and,
secondly, we shall demonstrate that its technical aspects are,
especially in comparison with the conventional HREP, quite
user-friendly and not untractable.

\subsection{General case}

In review~\cite{SIGMA} we started from the same Schr\"{o}dinger
Eq.~(\ref{uneSEtaune}) as above. The change of the perspective was
that the $F \to P$ mapping was admitted arbitrarily time-dependent,
$\Omega=\Omega(t)$. This enabled us to upgrade definition (\ref{2}),
 \be
 |\varphi^{(SREP)}(t)\kt=
 \Omega(t)|\psi(t)\kt=
 \Omega^\dagger(t)|\tilde{\psi}(t)\kt  \in {\cal
 H}^{(P)}\,,
 \ \ \ \ \ \
 |{\psi}(t)\kt,
 |\tilde{\psi}(t)\kt \in {\cal H}^{(F)}\,.
 \ee
Next, still without any direct reference to the prospective NHH
context we set
 \be
 \mathfrak{h}_{(SREP)}=\mathfrak{h}_{(SREP)}(t)
   = \Omega(t)\,H(t) \,\Omega^{-1}(t)\,
   \label{haham}
 \ee
and inserted all of these ansatzs in Eq.~(\ref{uneSEtaune}).
Elementary differentiation led to the doublet of $F-$space
Schr\"{o}dinger equations
 \be
 {\rm i}\p_t|\psi(t)\kt = G(t)\,|\psi(t)\kt\,,
 \ \ \ \ \ \
 |\psi(t)\kt \in {\cal H}^{(F)}\,,
 \label{SEtaujoa}
 \ee
 \be
 {\rm i}\p_t|\tilde{\psi}(t)\kt = G^\dagger(t)\,|\tilde{\psi}(t)\kt\,,
 \ \ \ \ \ \
 |\tilde{\psi}(t)\kt \in {\cal H}^{(F)}\,.
 \label{SEtaujobe}
 \ee
We emphasized that the generator $ G(t) = H(t) - \Sigma(t)$ of the
time-evolution of the wave functions differed from the non-Hermitian
energy-operator of Eq.~(\ref{haham}) by the easily evaluated
additional Coriolis-force term of a purely kinematical origin
\cite{Coriolis},
 \be
 \Sigma(t)
 = {\rm i}\Omega^{-1}(t)\left [\partial_t\Omega(t)\right ]
 = {\rm i}\Omega^{-1}(t)\dot{\Omega}(t)\,.
 \label{diffe}
  \ee
Now we may add that the emergence of the Coriolis force changes the
time-dependence of any operator $\mathfrak{a}$ of an observable. Its
$P-$space representation $\mathfrak{a}(t)= \mathfrak{a}^\dagger(t)$
as well as its $F-$space image
$A(t)=\Omega^{-1}(t)\mathfrak{a}(t){\Omega}(t)$ may be
differentiated yielding formula
 \be
 {\rm i}\partial_t A(t) = A(t) \Sigma(t)-\Sigma(t) A(t)
 +\Omega^{-1}(t)[{\rm i}\dot{\mathfrak{a}}(t)]{\Omega}(t)\,.
 \label{devet}
 \ee
This formula will be a starting point of our forthcoming
considerations. It is important to notice that in its general form
it is still too formal since it intermingles the information encoded
in the $P-$space operator $\mathfrak{a}$ with the one  encoded in
its $P-$space transform $A$ and with the one encoded in the Dyson's
operator $\Omega$ itself.

\subsection{Heisenberg representations}

The characteristic HREP postulate (\ref{rovnajse}) that the wave
functions of a state remain constant in time means, in the general
three-Hilbert-space context of Eqs.~(\ref{SEtaujoa}) and
(\ref{SEtaujobe}), that in the prospective NHH setting we have to
select, first of all, the trivial generator, $G^{(NHH)}(t) \equiv
0$. Such a constraint implies the necessity of the coincidence
between the Coriolis NHH force and the NHH observable-energy
operator,
 \be
 \Sigma_{(NHH)}(t)=H_{(NHH)}(t)\,.
 \label{charhr}
 \ee
Thus, given the energy operator $H_{(NHH)}(t)$ as a dynamical input
information we may reconstruct the specific Dyson's NHH operator
$\Omega(t)=\Omega_{(NHH)}(t)$ via Eq.~(\ref{diffe}). The NHH version
of this operator differential equation reads
 \be
  {\rm i}\partial_t\Omega_{(NHH)}(t)
 =\Omega_{(NHH)}(t)\,H_{(NHH)}(t)\,.
 \label{differe}
  \ee
Thus, a time-dependent NHH Dyson map is defined, in principle at
least. As an operator Cauchy problem the construction must be
initiated by a suitable preselected non-unitary operator
$\Omega^{(NHH)}(t_0)$. The latter initial value must be such that
the NHH Hamiltonian observability remains guaranteed,
 \be
 H_{(NHH)}^\dagger(t_0)\ \Theta^{(NHH)}(t_0)
 \ =\ \Theta^{(NHH)}(t_0)\,H_{(NHH)}(t_0)\,.
 \label{Diehor}
 \ee
We may conclude that the main difference between the Hermitian and
non-Hermitian HREP constructions is that the operator solution
$\Omega^{(NHH)}(t)$ of Eq.~(\ref{differe}) is not unitary. Thus, in
the NHH setting it is now time to add another simplifying
assumption:

\begin{defn}
 \label{defone}
The special stationary subfamily of the general NHH Hamiltonians of
Eq.~(\ref{charhr}) will be specified by requirement $H_{(NHH)}(t) =
H_{(NHH)}(0)$ and marked by a dedicated subscript,
 $
 H_{(NHH)}(0)=H_{(spec)}$.
\end{defn}
It is worth noticing that even in the conventional Hermitian HREP
recipe the use of stationary Hamiltonians simplifies a number of
technicalities. The same holds true in the NHH setting. First of
all, it is easy to prove the following

\begin{lemma}

In the stationary special case of Definition \ref{defone} our
operator differential Eq.~(\ref{differe}) may be solved in closed
form,
 \be
 \Omega_{(spec)}(t)=\Omega_{(spec)}(t_0)\,e^{-{\rm i}\,(t-t_0)\,
 H_{(spec)}}\,.
 \label{ome}
 \ee
The initial value must be compatible with the factorization of
metric in Eq.~(\ref{Diehor}),
 \be
 \Theta^{(spec)}(t_0)=\Omega_{(spec)}^\dagger(t_0)\Omega_{(spec)}(t_0)\,.
 \label{factspec}
 \ee
 \label{propone}
 \end{lemma}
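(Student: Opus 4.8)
The plan is to verify directly that the proposed closed form \eqref{ome} satisfies the operator differential equation \eqref{differe} together with the stationarity constraint $H_{(NHH)}(t)=H_{(spec)}$ of Definition~\ref{defone}. First I would substitute $H_{(NHH)}(t)=H_{(spec)}$ into \eqref{differe}, so that the equation to be solved becomes the linear, constant-coefficient operator Cauchy problem
\be
 {\rm i}\partial_t\Omega_{(spec)}(t)=\Omega_{(spec)}(t)\,H_{(spec)}\,,
 \qquad
 \Omega_{(spec)}(t_0)\ \mbox{prescribed}\,.
 \label{plancauchy}
\ee
Because $H_{(spec)}$ does not depend on $t$, the operator exponential $e^{-{\rm i}\,(t-t_0)\,H_{(spec)}}$ is well defined (as a norm-convergent power series on the relevant Hilbert space, or via functional calculus when $H_{(spec)}$ is unbounded but generates a strongly continuous group), and it commutes with $H_{(spec)}$ itself. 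Differentiating the ansatz \eqref{ome} termwise then gives ${\rm i}\partial_t\Omega_{(spec)}(t)=\Omega_{(spec)}(t_0)\,e^{-{\rm i}\,(t-t_0)\,H_{(spec)}}\,H_{(spec)}=\Omega_{(spec)}(t)\,H_{(spec)}$, which is exactly \eqref{plancauchy}; setting $t=t_0$ reproduces the prescribed initial value. Uniqueness follows from the standard Picard/Gr\"onwall argument for linear operator ODEs, so \eqref{ome} is the solution, not merely a solution.

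Next I would address the compatibility condition \eqref{factspec}. This is essentially a matter of consistency of notation rather than a separate computation: the metric is defined throughout the paper by $\Theta=\Omega^\dagger\Omega$, and at the initial instant $t_0$ this reads $\Theta^{(spec)}(t_0)=\Omega_{(spec)}^\dagger(t_0)\Omega_{(spec)}(t_0)$. The only genuine content is that the freely chosen initial operator $\Omega_{(spec)}(t_0)$ must be selected so that the resulting $\Theta^{(spec)}(t_0)$ obeys the observability requirement \eqref{Diehor}, i.e. $H_{(spec)}^\dagger\,\Theta^{(spec)}(t_0)=\Theta^{(spec)}(t_0)\,H_{(spec)}$; I would simply remark that such $\Omega_{(spec)}(t_0)$ exist (for instance whenever $H_{(spec)}$ is diagonalizable with real spectrum, one builds $\Theta^{(spec)}(t_0)$ from a biorthogonal eigenbasis and takes any positive square root as $\Omega_{(spec)}(t_0)$), so the Cauchy problem is not vacuous.

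Finally, as an optional closing remark, I would note that the closed form automatically propagates the observability: since $\Theta^{(spec)}(t)=\Omega_{(spec)}^\dagger(t)\Omega_{(spec)}(t)=e^{{\rm i}(t-t_0)H_{(spec)}^\dagger}\,\Theta^{(spec)}(t_0)\,e^{-{\rm i}(t-t_0)H_{(spec)}}$, a short manipulation using \eqref{Diehor} at $t_0$ shows $H_{(spec)}^\dagger\,\Theta^{(spec)}(t)=\Theta^{(spec)}(t)\,H_{(spec)}$ for all $t$, so the intertwining relation \eqref{Diehor} is preserved under the flow. I expect the only delicate point to be the functional-analytic status of the exponential when $H_{(spec)}$ is unbounded or non-self-adjoint in ${\cal H}^{(F)}$ --- strictly one needs $-{\rm i}H_{(spec)}$ to generate a ($C_0$-)group --- but at the level of rigor of the present paper this is handled by the same formal power-series manipulation that underlies the ordinary Hermitian HREP, and I would state it as such rather than belabour the domain questions.
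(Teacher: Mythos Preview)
Your proposal is correct and is precisely the direct verification one expects; the paper in fact offers no proof at all for this lemma, merely announcing that ``it is easy to prove'' and moving on, so your argument fills in what the paper leaves implicit. Your optional closing remark about the propagation of the intertwining relation anticipates (a weaker form of) the paper's Proposition~\ref{proptwo}, which actually shows $\Theta^{(spec)}(t)$ is time-independent; you may wish to omit that paragraph here to avoid overlap with the next result.
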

The following result is slightly less expectable.

\begin{prop}
Although the NHH Dyson's map of Eq.~(\ref{ome}) varies with time it
still leads to the metric which is time-independent,
$\Theta^{(spec)}(t)=\Theta^{(spec)}(t_0)$.
 \label{proptwo}
 \end{prop}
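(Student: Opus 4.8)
The plan is to combine the closed-form Dyson map (\ref{ome}) with the single algebraic constraint (\ref{Diehor}) imposed at the initial instant. Inserting $\Omega_{(spec)}(t)=\Omega_{(spec)}(t_0)\,e^{-{\rm i}(t-t_0)H_{(spec)}}$ into $\Theta^{(spec)}(t)=\Omega_{(spec)}^\dagger(t)\Omega_{(spec)}(t)$ and using the factorization (\ref{factspec}) at $t_0$, one obtains at once
 \be
 \Theta^{(spec)}(t)=
 e^{{\rm i}(t-t_0)H_{(spec)}^\dagger}\;\Theta^{(spec)}(t_0)\;e^{-{\rm i}(t-t_0)H_{(spec)}}\,,
 \ee
so the whole statement reduces to showing that the right-hand side is independent of $t$.

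I would establish this by the differential route. Differentiating $\Theta^{(spec)}(t)=\Omega_{(spec)}^\dagger(t)\Omega_{(spec)}(t)$ and inserting the defining equation (\ref{differe}), ${\rm i}\partial_t\Omega_{(spec)}=\Omega_{(spec)}H_{(spec)}$, together with its adjoint ${\rm i}\partial_t\Omega_{(spec)}^\dagger=-H_{(spec)}^\dagger\Omega_{(spec)}^\dagger$, the product rule yields the operator Cauchy problem
 \be
 {\rm i}\partial_t\Theta^{(spec)}(t)=\Theta^{(spec)}(t)\,H_{(spec)}-H_{(spec)}^\dagger\,\Theta^{(spec)}(t)\,,
 \ee
with $\Theta^{(spec)}(t_0)$ prescribed by (\ref{factspec}). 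Then I would observe that the constant function $t\mapsto\Theta^{(spec)}(t_0)$ is itself a solution of this problem: its $t$-derivative vanishes, while the right-hand side evaluated at $\Theta^{(spec)}(t_0)$ equals $\Theta^{(spec)}(t_0)H_{(spec)}-H_{(spec)}^\dagger\Theta^{(spec)}(t_0)$, which is zero precisely by (\ref{Diehor}) specialized to the stationary case ($H_{(NHH)}(t_0)=H_{(spec)}$, $\Theta^{(NHH)}(t_0)=\Theta^{(spec)}(t_0)$). Uniqueness for a linear first-order operator differential equation then forces $\Theta^{(spec)}(t)\equiv\Theta^{(spec)}(t_0)$, which is the claim.

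An equivalent finish that avoids invoking uniqueness is purely computational: (\ref{Diehor}) rewrites as $H_{(spec)}^\dagger=\Theta^{(spec)}(t_0)\,H_{(spec)}\,[\Theta^{(spec)}(t_0)]^{-1}$, hence $e^{{\rm i}(t-t_0)H_{(spec)}^\dagger}=\Theta^{(spec)}(t_0)\,e^{{\rm i}(t-t_0)H_{(spec)}}\,[\Theta^{(spec)}(t_0)]^{-1}$; substituting this into the first displayed equation makes the two exponentials collapse against each other and leaves exactly $\Theta^{(spec)}(t_0)$. I do not expect a real obstacle here — the content is conceptual rather than technical: the quasi-Hermiticity relation is assumed only at $t_0$, and the argument shows that it is the stationarity of $H_{(spec)}$ that propagates that single identity to all times, so no extra hypothesis (and, in particular, no genuine time-dependence of the metric) is needed. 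In an infinite-dimensional model one would additionally have to address operator domains and well-posedness of the $\Theta$-equation, but at the formal level adopted in the surrounding SRNH literature these points are set aside.
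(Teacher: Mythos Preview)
Your proposal is correct. The ``purely computational'' finish you offer is exactly the paper's own proof, only stated more compactly: the paper writes down the same displayed formula for $\Theta^{(spec)}(t)$, expands one of the exponentials as an operator Taylor series, uses the initial-time quasi-Hermiticity relation $H_{(spec)}^\dagger\,\Theta^{(spec)}(t_0)=\Theta^{(spec)}(t_0)\,H_{(spec)}$ to push each power of $H_{(spec)}^\dagger$ through $\Theta^{(spec)}(t_0)$, re-sums, and then cancels the two exponentials --- which is nothing but the termwise justification of your similarity identity $e^{{\rm i}(t-t_0)H_{(spec)}^\dagger}=\Theta^{(spec)}(t_0)\,e^{{\rm i}(t-t_0)H_{(spec)}}\,[\Theta^{(spec)}(t_0)]^{-1}$. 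Your differential route (deriving the linear evolution equation for $\Theta^{(spec)}$ and checking that the constant $\Theta^{(spec)}(t_0)$ solves it, then invoking uniqueness) is a genuine alternative not used in the paper; it sidesteps any series manipulation at the cost of an appeal to well-posedness, which at the formal level adopted throughout is harmless.
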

 \begin{proof}
In the definition
 \be
 \Theta^{(spec)}(t)=e^{{\rm i}\,(t-t_0)\, H_{(spec)}^\dagger\,}\,
 \Theta^{(spec)}(t_0)\,e^{-{\rm i}\,(t-t_0)\, H_{(spec)}}
 \,
 \label{Dieh}
 \ee
we expand one of the exponentials in the operator Taylor series.
Next, we recall the Hamiltonian-observability initial condition
 \ben
 H_{(spec)}^\dagger\,\Theta^{(spec)}(t_0)
 =\Theta^{(spec)}(t_0)\,H_{(spec)}\,
 \een
and intertwine all of the powers of the time-independent
$H_{(spec)}$ with the central operator of the initial-value metric.
Finally, we re-convert the modified Taylor series into the operator
exponential and multiply the two time-dependent factors yielding the
unit operator.
 \end{proof}

We see that our stationary operator of metric defines the physical
inner product in Hilbert space $ {\cal H}^{(S)}={\cal H}^{(spec)}$.
In the light of the construction the product and, hence, also the
space remain time-independent. This has further nontrivial
consequences in ${\cal H}^{(P)}$.

\begin{prop}
In the stationary NHH scenario also the self-adjoint isospectral
partner Hamiltonian operator remains stationary.

\end{prop}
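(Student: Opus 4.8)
The plan is to write the self-adjoint partner explicitly through the Dyson map and then to exploit the fact that a stationary generator commutes with its own exponential. By definition~(\ref{haham}), specialized to the stationary case $H(t)=H_{(spec)}$ of Definition~\ref{defone}, the physical ($P$-space) image of the energy operator is
\ben
 \mathfrak{h}_{(spec)}(t)=\Omega_{(spec)}(t)\,H_{(spec)}\,\Omega_{(spec)}^{-1}(t)\,,
\een
which is a similarity transform of $H_{(spec)}$ and hence isospectral with it; its self-adjointness in the physical space ${\cal H}^{(spec)}$ follows from the quasi-Hermiticity relation~(\ref{Diehor}) combined with the time-independence of the metric $\Theta^{(spec)}$ established in Proposition~\ref{proptwo} (which propagates (\ref{Diehor}) from $t_0$ to every $t$).

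First I would substitute the closed-form solution~(\ref{ome}), $\Omega_{(spec)}(t)=\Omega_{(spec)}(t_0)\,e^{-{\rm i}(t-t_0)H_{(spec)}}$, into the displayed formula. Since $e^{-{\rm i}(t-t_0)H_{(spec)}}$ commutes with $H_{(spec)}$, the two exponential factors flanking $H_{(spec)}$ cancel against each other, leaving
\ben
 \mathfrak{h}_{(spec)}(t)=\Omega_{(spec)}(t_0)\,H_{(spec)}\,\Omega_{(spec)}^{-1}(t_0)=\mathfrak{h}_{(spec)}(t_0)\,,
\een
which is exactly the asserted stationarity.

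An equivalent, more conceptual route is to read the conclusion directly off the general operator evolution law~(\ref{devet}). For the energy observable the $F$-space image is $A(t)=\Omega^{-1}(t)\mathfrak{h}(t)\Omega(t)=H(t)$, which in the present scenario is the constant $H_{(spec)}$, so the left-hand side of~(\ref{devet}) vanishes; simultaneously the Coriolis term is $\Sigma_{(NHH)}(t)=H_{(NHH)}(t)=H_{(spec)}$ by~(\ref{charhr}) and Definition~\ref{defone}, so the commutator $A\Sigma-\Sigma A$ vanishes as well. Equation~(\ref{devet}) then collapses to $\Omega_{(spec)}^{-1}(t)\,[{\rm i}\dot{\mathfrak{h}}_{(spec)}(t)]\,\Omega_{(spec)}(t)=0$, i.e.\ $\dot{\mathfrak{h}}_{(spec)}(t)=0$.

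There is essentially no hard step here: everything rests on $[H_{(spec)},\exp(-{\rm i}s\,H_{(spec)})]=0$. The only point deserving a line of care is the self-adjointness assertion — one must invoke Proposition~\ref{proptwo} so that the observability condition~(\ref{Diehor}), originally imposed only at the initial instant $t_0$, remains valid for all $t$, and hence $\mathfrak{h}_{(spec)}(t)$ is a genuine observable in the fixed physical Hilbert space ${\cal H}^{(spec)}$ rather than merely a time-independent operator.
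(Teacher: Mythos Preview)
Your proposal is correct and your primary argument --- substituting the closed form~(\ref{ome}) and using $[H_{(spec)},e^{-{\rm i}(t-t_0)H_{(spec)}}]=0$ to cancel the exponentials --- is exactly the paper's proof. The paper does not include your alternative route via Eq.~(\ref{devet}) nor your explicit justification of self-adjointness through Proposition~\ref{proptwo}, but these are welcome additions rather than deviations.
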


\begin{proof}
In order to prove the time-independence of
 \be
 \mathfrak{h}_{(spec)}(t)=\Omega_{(spec)}(t)\,H_{(spec)}
 \left [\Omega_{(spec)}(t)
 \right ]^{-1}
 = \mathfrak{h}_{(spec)}^\dagger(t)\,,
 \ee
it is sufficient to proceed in analogy with the proof of Prop.
\ref{proptwo}. The situation is now simpler because the
time-dependent exponentials commute with the Hamiltonian. Hence,
they immediately cancel and no operator-intertwining is needed.
\end{proof}

\subsection{Heisenberg equations of motion}

In the special NHH stationary case of Definition \ref{defone} the
knowledge of the mapping $\Omega_{(spec)}(t)$ of Eq.~(\ref{ome})
together with the knowledge of  an operator $\mathfrak{a}(t)$ of any
observable in ${\cal H}^{(P)}$ enables us to reconstruct the action
$A(t)$ of the same observable in the auxiliary, unphysical space $
{\cal H}^{(F)}$. The latter space merely differs from the correct
and physical Hilbert space $ {\cal H}^{(S)}={\cal H}^{(spec)}$ by
the ``false'' metric $\Theta^{(F)}=I$ so that the same formula
 \be
 A^{(spec)}(t)=e^{{\rm i}\,(t-t_0)\, H_{(spec)}}\,
 \Omega_{(spec)}^{-1}(t_0)\,
 \mathfrak{a}(t)\,
 \Omega_{(spec)}(t_0)\,
 e^{-{\rm i}\,(t-t_0)\, H_{(spec)}}
 \,
 \label{Diehphu}
 \ee
offers the explicit definition. No equation of motion is needed.
Naturally, whenever asked for, we may still recall the universal
rule (\ref{devet}) and write down, after appropriate substitutions,
the equation of motion
 \be
 {\rm i}\partial_t A_{(spec)}(t) = A_{(spec)}(t)  H_{(spec)}-
 H_{(spec)} A_{(spec)}(t)
 +K(t)\,.
 \label{joevet}
 \ee
Here, an additional and independent dynamical input must be provided
via operator
 \ben
 K(t)=\Omega^{-1}_{(spec)}(t)[{\rm
 i}\dot{\mathfrak{a}}(t)]{\Omega}_{(spec)}(t)\,.
 \een
Unless we select dynamical scenario in which $\partial_t
{\mathfrak{a}}(t)=0$, recipe (\ref{joevet}) remains highly formal.
The knowledge of the time-derivative of observable $\mathfrak{a}$ in
${\cal H}^{(P)}$ (and of its initial value at any $t_{ini}$) could
have been much more easily complemented by an exhaustive
reconstruction of operator $\mathfrak{a}(t)$ and by its insertion,
at all times, in definition (\ref{Diehphu}) of $A^{(spec)}(t)$.

In the sense of this comment the more general non-stationary cases
which do not obey Definition \ref{defone} but which still do not
need any additional information are much more interesting.

\begin{defn}

The observables exhibiting the manifest-time-independence property
(i.e., satisfying relation $\partial_t
{\mathfrak{a}}_{(indep)}(t)=0$ in ${\cal H}^{(P)}$) will be marked
by subscript $_{(indep)}$.
 \label{deftwo}
\end{defn}

\begin{thm}

Under the constraint imposed by Definition \ref{deftwo}, the
observable $A_{(indep)}(t)$ which is defined as acting in the
physical NHH Hilbert space ${\cal H}^{(S)}={\cal H}_{(NHH)}$ may be
determined from the pair of operator differential evolution
equations
 \be
 {\rm i}\partial_t A_{(indep)}(t) = A_{(indep)}(t)
 H_{(NHH)}(t)-H_{(NHH)}(t) A_{(indep)}(t)\,
 \label{devetak}
 \ee
and
 \be
 {\rm i}\partial_t A^\dagger_{(indep)}(t) =
 A^\dagger_{(indep)}(t)
 H^\dagger_{(NHH)}(t)-H^\dagger_{(NHH)}(t) A^\dagger_{(indep)}(t)\,
 \label{devetakonj}
 \ee
together with the obligatory coupling
 \be
 A^\dagger_{(indep)}(t_0)\Theta_{(NHH)}(t_0)
 =\Theta_{(NHH)}(t_0)A_{(indep)}(t_0)\,
 \label{suff}
 \ee
between the respective initial operator values.

\end{thm}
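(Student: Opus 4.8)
The plan is to read the asserted system as a Cauchy problem and to check two complementary things: that the operator $A_{(indep)}(t)=\Omega^{-1}_{(NHH)}(t)\,\mathfrak{a}_{(indep)}(t)\,\Omega_{(NHH)}(t)$ built from a genuine $ {\cal H}^{(P)}$-observable actually solves it, and that \emph{any} solution of it is automatically quasi-Hermitian with respect to the physical metric at all times --- the latter being exactly what forces the coupling~(\ref{suff}) to be imposed at $t_0$.

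First I would specialize the universal rule~(\ref{devet}) to the NHH situation. The vanishing of the generator fixes $\Sigma_{(NHH)}(t)=H_{(NHH)}(t)$ through Eq.~(\ref{charhr}), and Definition~\ref{deftwo} annihilates the inhomogeneous term, $\Omega^{-1}_{(NHH)}(t)[{\rm i}\dot{\mathfrak{a}}_{(indep)}(t)]\Omega_{(NHH)}(t)=0$; inserting both facts into~(\ref{devet}) should reproduce Eq.~(\ref{devetak}) immediately. Equation~(\ref{devetakonj}) I would then obtain either as the Hermitian conjugate of~(\ref{devetak}) or, equivalently, by differentiating $A^\dagger_{(indep)}(t)=\Omega^\dagger_{(NHH)}(t)\,\mathfrak{a}_{(indep)}(t)\,[\Omega^\dagger_{(NHH)}(t)]^{-1}$ with the help of~(\ref{differe}). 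The initial coupling~(\ref{suff}) I would read off from the self-adjointness $\mathfrak{a}^\dagger_{(indep)}(t_0)=\mathfrak{a}_{(indep)}(t_0)$ in $ {\cal H}^{(P)}$ together with the factorization $\Theta_{(NHH)}(t_0)=\Omega^\dagger_{(NHH)}(t_0)\Omega_{(NHH)}(t_0)$: both $A^\dagger_{(indep)}(t_0)\Theta_{(NHH)}(t_0)$ and $\Theta_{(NHH)}(t_0)A_{(indep)}(t_0)$ collapse to $\Omega^\dagger_{(NHH)}(t_0)\mathfrak{a}_{(indep)}(t_0)\Omega_{(NHH)}(t_0)$, so~(\ref{suff}) is merely the $t=t_0$ transcript of the observability condition~(\ref{Diehor}).

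The conceptual core is the converse, consistency step. I would first differentiate~(\ref{differe}) to get the metric evolution law, ${\rm i}\partial_t\Theta_{(NHH)}(t)=\Theta_{(NHH)}(t)H_{(NHH)}(t)-H^\dagger_{(NHH)}(t)\Theta_{(NHH)}(t)$, and then introduce the defect operator
 \be
 D(t)=A^\dagger_{(indep)}(t)\,\Theta_{(NHH)}(t)-\Theta_{(NHH)}(t)\,A_{(indep)}(t)\,.
 \ee
Feeding~(\ref{devetak}), (\ref{devetakonj}) and the metric law into the derivative of $D(t)$, I expect the ``mixed'' products to cancel in pairs and to be left with the closed homogeneous equation ${\rm i}\partial_t D(t)=D(t)H_{(NHH)}(t)-H^\dagger_{(NHH)}(t)D(t)$. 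Since~(\ref{suff}) says $D(t_0)=0$, uniqueness for this linear operator differential equation yields $D(t)\equiv 0$, i.e.\ $A_{(indep)}(t)$ remains quasi-Hermitian with respect to $\Theta_{(NHH)}(t)$ for all $t$. Combined with the elementary existence and uniqueness theory for the linear pair~(\ref{devetak})--(\ref{devetakonj}), this shows that the Cauchy data $A_{(indep)}(t_0)$, $A^\dagger_{(indep)}(t_0)$ constrained by~(\ref{suff}) determine a \emph{bona fide} observable acting in $ {\cal H}^{(S)}={\cal H}_{(NHH)}$, which is the claim.

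The main obstacle I anticipate is nothing more than the algebraic bookkeeping in the defect computation: one must use the true, $\Omega_{(NHH)}$-driven metric law above and not assume stationarity of $\Theta_{(NHH)}(t)$ (outside the special case of Definition~\ref{defone} it need not be constant), and one must keep the signs straight so that the contributions of $\partial_t\Theta_{(NHH)}$, $\partial_t A_{(indep)}$ and $\partial_t A^\dagger_{(indep)}$ combine into the single commutator-type remainder. Any domain subtleties for operator differential equations in infinite dimension I would dispatch under the usual regularity assumptions.
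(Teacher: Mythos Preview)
Your proposal is correct and follows the same route as the paper: specialize Eq.~(\ref{devet}) via Eq.~(\ref{charhr}) and Definition~\ref{deftwo} to obtain the pair~(\ref{devetak})--(\ref{devetakonj}), then propagate the quasi-Hermiticity constraint using the metric evolution law~(\ref{deonj}). Your defect-operator formulation, with its explicit appeal to uniqueness for the homogeneous linear equation ${\rm i}\partial_t D=DH_{(NHH)}-H^\dagger_{(NHH)}D$, is simply a cleaner packaging of what the paper phrases as ``differentiating~(\ref{suffic}) yields a relation which is identically satisfied.''
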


\begin{proof}
The assumption of the $P-$space stationarity
$\partial_t{\mathfrak{a}}_{(indep)}(t)=0$ enables us to omit the
last term from Eq.~(\ref{devet}) living in friendly space ${\cal
H}^{(F)}$. In combination with Eq.~(\ref{charhr}) this leads to the
desired pair of generalized non-Hermitian Heisenberg equations. The
Hermitian-conjugation duplicity is nontrivial because we must
guarantee the observability of $A_{(indep)}(t)$ at all times, i.e.,
the validity of relation
 \be
 A^\dagger_{(indep)}(t)\Theta_{(NHH)}(t)
 =\Theta_{(NHH)}(t)A_{(indep)}(t)\,.
 \label{suffic}
 \ee
This relation reflects the $P\to F$-inherited Hermiticity {\it
alias} crypto-Hermiticity of our observable. Once we differentiate
this relation with respect to time we get a new relation which is
identically satisfied. This is immediately proved when we recall (or
quickly derive) an insert the well known \cite{Bila} elementary
identity
 \be
 {\rm i}\partial_t \Theta(t) = \Theta(t) \Sigma(t) -
 \Sigma^\dagger(t)  \Theta(t)
 \,.
 \label{deonj}
 \ee
Thus, it is necessary and sufficient to postulate Eq.~(\ref{suffic})
at $t=t_0$.
\end{proof}

\section{Summary}

The use of non-unitary operators $\Omega$ in the time-independent
and time-dependent regimes is not too dissimilar. One simply works
with the $P-F-S$ triplet of Hilbert spaces in a way which is
summarized by the following diagram,
 \ben
  \ba
    \begin{array}{|c|}
 \hline
 \vspace{-0.3cm}\\
  {\rm  \fbox{\bf
  {P}-space}}\\
  \ {\rm textbook \  level\ quantum\ theory}\ \\
 {\rm selfadjoint\   observables}\ \mathfrak{a}\\
 {\rm selfadjoint\  Hamiltonian}\ \mathfrak{h}\\
  \ \ \ \ {\rm  calculations= \underline{prohibitively\ complicated}  } \ \\
 \hline
 \ea
 \\
 \\
 \stackrel{{\bf  Dyson's \ map}\ \Omega}{}
 \ \
  \nearrow\ \  \  \ \ \ \ \ \
 \ \ \ \ \ \ \ \
  \ \  \  \ \ \ \ \ \
 \ \ \ \ \  \searrow \nwarrow\
 \stackrel{\bf   equivalence}{}\\
 \\
 \begin{array}{|c|}
 \hline
 \vspace{-0.35cm}\\
  {\rm  \fbox{\bf
  {F}-space}}\\
   {\rm  friendlier}\  H=\Omega^{-1}\,\mathfrak{h}\,\Omega\,  \\
  H \neq H^\dagger \ {\rm (space=\underline{false})} \\
    \ {\rm {  \underline{feasible}\ calculations}}  \\
  \hline
 \ea
 \stackrel{ {\bf  hermitization}  }{ \longrightarrow }
 \begin{array}{|c|}
 \hline
 \vspace{-0.35cm}\\
  {\rm  \fbox{\bf
  {S}-space}}\\
      {\rm  inner-product-metric\ } \Theta=\Omega^\dagger\Omega \\
   \underline{\rm  selfadjoint}\   H=H^\ddagger=\Theta^{-1}H^\dagger\Theta  \\
  {\rm  \underline{standard}\ interpretation}
 \\
 \hline
 \ea
\\
\\
\ea
 \een
In this framework we described a few basic features of the
innovated, non-Hermitian Heisenberg-representation approach to
quantum theory. In a restricted format of the letter we managed to
demonstrate not only its mathematical consistency but also its
sufficiently friendly and feasible nature.

\newpage


\begin{thebibliography}{99}

\bibitem{nine}
D. F. Styer et al, Am. J. Phys. 70 (20023) 288.

\bibitem{Heisen}
W. Heisenberg, Z. Phys. 33 (1925) 879;

M. Born and P. Jordan, Z. Phys. 34 (1925) 858;

M. Born,  W. Heisenberg and P. Jordan, Z. Phys. 35 (1926) 557.

\bibitem{Schroe}
E. Schr\"{o}dinger, Ann. d. Physik 79 (1926) 361
and 81 (1926) 109.

\bibitem{Messiah}
A. Messiah, Quantum Mechanics, Vol. I (Amsterdam, North Holland,
1961).

\bibitem{Drell}
J. D. Bjorken and S. Drell, Relativistic Quantum Fields (New York,
McGraw-Hill, 1965).

\bibitem{book}
F. Bagarello, J.-P. Gazeau, F. Szafraniec and M. Znojil, Eds,
``Non-Selfadjoint Operators in Quantum Physics: Mathematical
Aspects'' (Hoboken, Wiley, 2015).
%

\bibitem{Carl}
C. M. Bender, Rep. Prog. Phys. 70 (2007) 947.

\bibitem{ali}
A. Mostafazadeh,
Int. J. Geom. Meth. Mod. Phys. 7
 (2010) 1191.

\bibitem{Dyson}
F. J. Dyson, Phys Rev 102 (1956) 1217.

\bibitem{Geyer}
F. G. Scholtz, H. B. Geyer and F. J. W. Hahne, Ann. Phys. (NY) 213
(1992) 74.

\bibitem{SIGMA}
M. Znojil, SIGMA 5 (2009) 001, e-print arXiv:0901.0700.

\bibitem{MZbook}
M. Znojil, in ``Non-Selfadjoint Operators in Quantum Physics:
Mathematical Aspects'', F. Bagarello et al, Eds, (Hoboken, Wiley,
2015), pp. 7 - 58.
%

\bibitem{Bila}
C. Figueira de Morisson and A. Fring, J. Phys. A: Math. Gen. 39
(2006) 9269;

H. B\'{\i}la,
 ``Non-Hermitian Operators in
Quantum Physics'' (Charles University, PhD thesis, 2008) and
%
  e-print
 arXiv:0902.0474;

J.-B. Gong and Q.-H. Wang, Phys. Rev. A 82 (2010) 012103;

J.-B. Gong and Q.-H. Wang,
J. Phys. A: Math. Theor. 46 (2013) 485302.


\bibitem{timedep}
M. Znojil, %
%
Phys. Rev. D 78  (2008) 085003.

\bibitem{Coriolis}
M. Znojil,
Int. J. Theor. Phys. 52 (2013) 2038.

\end{thebibliography}
\end{document}